\newtheorem{theor}{Theorem}[section]{\bf}{\it}
\newtheorem{lem}[theor]{Lemma}{\bf}{\it}
{\bf}{\it}
\newtheorem{rem}{Remark}[section]{\bf}{\rm}
\numberwithin{equation}{section}
\numberwithin{theor}{section}
\numberwithin{rem}{section}
\def\func#1{\mathop{\rm #1}}
\begin{document}
\title[]{The Classical Inverse Problem for Multi-Particle Densities in 
the Canonical Ensemble Formulation}
\author{}
\address{}
\email{}
\thanks{}
\author{Irina Navrotskaya}
\curraddr{}
\email{}
\thanks{}
\subjclass{}
\keywords{}
\thanks{}

\begin{abstract}
We provide sufficient conditions for the solution of the classical 
inverse problem in the canonical distribution for multi-particle 
densities.  Specifically, we show that there exists a unique potential 
in the form of a sum of $m$-particle ($m\geq 2$) 
interactions producing a given 
$m$-particle density.  The existence and uniqueness of the solution to the multi-particle inverse problem is essential for the numerical simulations of matter using effective potentials derived from structural data.  Such potentials are often employed in coarse-grained modeling.  
The validity of the multi-particle inverse conjecture also has implications for liquid state theory.  For example, it provides the first step in proving the existence of the 
hierarchy of generalized Ornstein-Zernike relations.  
For the grand canonical distribution, the multi-particle inverse problem has been solved by Chayes and Chayes \cite{ChCh84}.  However, the setting of the canonical ensemble presents unique challenges arising from the impossibility of uncoupling interactions when the number of particles is fixed.  
\end{abstract}

\maketitle

\section{Introduction\label{intro}}

Consider a system of $N$ identical particles with coordinates $x_{1},...,x_{N}$ in
some complete $\sigma $-finite measure space $(\Lambda; dx)$ with non-zero measure $dx$.  In the simplest case, $\Lambda \subset \mathbb{R}^{n},n=1,2$, or $3$, $x_{i}$ is the position of
the $i^{th}$ particle, and $dx$ is the Lebesgue measure.  However, the  results here are not limited to this case and apply whenever the conditions stated in the paper are satisfied.  Particularly, 
$(\Lambda; dx)$ may be phase space and may include internal variables.  The completion of the product measure 
$dx^{\otimes k}$ on 
$\Lambda^{k}$ for $1\leq k\leq N$, is denoted by $d^{k}x$.      

We begin with a somewhat formal description of the problem.   
The total potential of the system has the form $W+U$, where 
$W(x_{1},...,x_{N})$ is a fixed scalar internal potential, 
and $U$ is an additional internal or external potential.  Since the particles in the system are identical, $W$ and $U$ are required to be symmetric functions.   

For $1\leq m \leq N$, the classical $m$-particle density in the canonical distribution is defined (up to a multiplicative constant) as \cite{Hansen86}
\begin{equation}
\rho^{(m)}_{U}(x_{1},...,x_{m})=\frac{\int_{\Lambda
^{N-m}}e^{-W-U}dx_{m+1}\cdots dx_{N}}{Z(U)},  \label{1}
\end{equation}
where 
\begin{equation}
Z(U)=\int_{\Lambda ^{N}}e^{-W-U}d^{N}x  \label{2}
\end{equation}
is the canonical partition function.  (The conditions ensuring that 
$0 < Z(U) < \infty $ will be specified later.)  Throughout the paper the inverse temperature $\beta$ is taken to be $1$.  

The $m$-particle inverse problem investigated here is whether, given a symmetric and positive function  $\rho^{(m)}$ on $\Lambda ^{m}$ with $\int_{\Lambda^{m}}\rho^{(m)}d^{m}x=1$, there exists a unique\footnote{The uniqueness of $U$ implies uniqueness of $u$.  See Theorem \ref{th_uniq}. } potential $U$ of the form 
\begin{equation}
U(x_{1},...,x_{N})=\sum_{1\leq i_{1}<\cdots <i_{m}\leq
N}u(x_{i_{1}},...,x_{i_{m}})\quad\text{a.e.},  \label{3} 
\end{equation}
such that $\rho^{(m)}_{U}=\rho^{(m)}$ a.e. on $\Lambda^m$.  
In (\ref{3}), $u$ is a symmetric function on $\Lambda ^{m}$, so $U$ is sought as a sum of symmetric $m$-particle interactions.  A generalization of this problem to systems of several species (mixtures) will be addressed in a separate paper.  

When $m=1$ the inverse problem originates from density functional theory for inhomogeneous fluids \cite{Evans79}.  It was solved for this case 
by Chayes et al. 
\cite[Sections 2 and 8]{ChChLi84}, for both the canonical and grand canonical
ensembles.  When $m\geq 2,$ but \textit{only} in the grand canonical ensemble, it was solved by Chayes and Chayes \cite{ChCh84}
when $\Lambda $ has finite $dx$ measure.   

An immediate application of the multi-particle ($m\geq 2$) 
inverse problem is to the 
numerical modeling of liquid solutions and soft matter, particularly 
coarse-grained (CG) modeling.  In a typical CG simulation, the structural data on CG sites (such as a set of pair or higher-order structural correlation functions 
\cite{Noid13}) are "inverted" to find two- or multi-particle interactions between CG sites.  The existence and uniqueness of such interactions have been implicitly assumed by the long established numerical methods designed for calculating effective potentials from structural data, such as iterative Boltzmann inversion (IBI) \cite{Muller02} and inverse Monte Carlo (IMC) \cite{LyLa95}.  (For the description of basic methods for deducing CG potentials and a survey of literature on the subject, see a recent review by Noid and numerous references therein \cite{Noid13}.)  

However, to the best of our knowledge, the inversion procedure has 
never been justified for the canonical ensemble, even though numerical 
simulations are often performed in this setting.  Instead, it has been erroneously assumed that the conclusions of \cite{ChCh84} are equally valid for the canonical distribution.  
This is a serious misconception that has been persisting in literature for decades.  The setting of the grand canonical ensemble eliminates major difficulties that arise when the number $N$ of particles is fixed.

The existence (and differentiability) of the map between $m$-particle densities and $m$-particle interaction potentials  also has interesting applications to liquid state theory.  For example, it implies the existence of a hierarchy of generalized Ornstein-Zernike (OZ) relations connecting 
$2m$-particle "direct correlation function" and $2m$-,...,$m$-particle densities.  (This will be the subject of a future paper.)

In this article, we provide sufficient conditions for the existence and uniqueness of 
the solutions to the inverse problem in the canonical distribution when $m\leq N$.  These results are summarized in Theorem \ref{Th_ex_un_sol}.  (Note, that for $m=N$, the solution is trivial.  Namely, $U=-\log \rho^{(N)} -W+C$, where $C\in\Bbb{R}$ is any constant.)  
To the best of our knowledge, there is no mathematical treatment of 
the $m\geq 2$ case in that setting.  The uniqueness
-but not existence- was proved by Henderson \cite{He74} when $m=2.$  

Similarly to Chayes et al. \cite{ChCh84,ChChLi84} we use a variational procedure.  Some of the arguments are adopted (after a suitable modification) from \cite[Sections 2 and 6]{ChChLi84}.  However, a number of issues have to be treated differently due to difficulties arising from the coupling of interactions.  In addition to the variational arguments, the verification 
of the multi-particle inverse conjecture for the canonical ensemble involves non-trivial measure-theoretical problems.  These problems were addressed by 
Navrotskaya and Rabier in \cite{Rabier15} where they 
were formulated as measure-theoretical properties of 
generalized $N$-means from $U$-statistics.  We will refer to 
the results established in \cite{Rabier15} when needed.

\section{Preliminaries\label{prelim}}

In this section we introduce some terminology and state the assumptions. 

All functions considered take values in the extended real line $\overline{\Bbb{R}}$, i.e. in $[-\infty,+\infty]$, unless specified otherwise.  The complement of a set $E\subset \Lambda^k$ is denoted $E^c=\Lambda^k\setminus E$, and $\left | E \right | =\int_{E} d^k x$ is the 
$d^{k} x$ measure of a measurable set $E\subset\Lambda^{k}$.  "Almost everywhere" ("a.e.") is always understood relative to the measure $d^k x$, with $k$ obvious from the context, and the same is true regarding the measurability of functions.  Subsets of $\Lambda^k$ of $d^k x$ measure $0$ will be called \textit{null} sets.   Since reference will be very often made to the complements of the 
null sets, we will call such complements \textit{co-null}.  

We shall assume that $\rho^{(m)}$ is the $m$-variable reduction to $\Lambda^{m}$ of some a.e. positive and symmetric probability density $P$ on $\Lambda ^{N}$, i.e. $P>0$ and symmetric a.e., $\int_{\Lambda ^{N}}Pd^{N}x=1$, and 
\begin{equation}
\rho^{(m)}(x_{1},...,x_{m})=\int_{\Lambda
^{N-m}}P(x_{1},...,x_{N})dx_{m+1}\cdots dx_{N}~\text{a.e. on $\Lambda^m$.}  
\label{4}
\end{equation}
It follows from the properties of $P$ that $\rho^{(m)}$ is a.e. positive and symmetric, $\int_{\Lambda^{m}} \rho^{(m)}d^{m}x=1$, and if integration in (\ref{4}) is performed with respect to any set of $N-m$ variables, the function so obtained is 
$\rho^{(m)}$ evaluated at the remaining $m$ variables.  

The fixed internal potential $W$ is assumed to be an a.e. finite and 
symmetric measurable function on $\Lambda^{N}$.  Particularly, 
$| W^{-1}(\infty) |=0$, 
which means that our conditions exclude hard core interactions.  
The absence of hard cores is also a main assumption 
in \cite{ChChLi84}, but they were allowed in \cite{ChCh84}.  

We will also require that 
\begin{equation}
(W+\log P)_{+}\in L^{1}(\Lambda ^{N};Pd^{N}x).  \label{7a}
\end{equation}
The quantity $\int_{\Lambda^N}(W+\log P) P d^{N}x$ is an 
analogue (in the canonical ensemble) of one of the two 
key functionals employed in the density functional 
theory \cite{Evans79}.  Therefore, condition (\ref{7a}) 
assures that this functional is well-defined in some sense.  
On these grounds a similar condition was imposed in \cite{ChCh84}.    
\section{Existence and Uniqueness}
\label{sec:ex_uniq}

Similarly to \cite{ChCh84,ChChLi84}, a variational method is used to prove the existence of solutions to the inverse problem.  Consider the functional    
\begin{equation}
\mathcal{F}_{P}(V):=\frac{e^{-\int_{\Lambda ^{N}}VPd^{N}x}}{Z(V)},  \label{8}
\end{equation}
with $Z$ as in (\ref{2}), defined on a set 
\begin{multline}
\mathcal{V}_{P}:=\{V\in L^{1}(\Lambda ^{N};Pd^{N}x),e^{-V}\in L^{1}(\Lambda
^{N};e^{-W}d^{N}x),  \label{9} \\
V(x_{1},...,x_{N})=\sum_{1\leq i_{1}<\cdots <i_{m}\leq
N}v(x_{i_{1}},...,x_{i_{m}})\text{ a.e. on $\Lambda^{N}$},\\   
\text{where $v$ is a symmetric real-valued measurable function on $\Lambda^m$}\}.
\end{multline}
Note that 
$V\in\mathcal{V}_P$ is a.e. finite and symmetric. 

The appearance of a functional in the form of (\ref{8}) in a 
variational proof of the inverse problem is not accidental.  Disregarding the terms independent 
of $V$, $-\log \mathcal{F}_P(V)$ is the 
Kullback and Leibler mean information for discrimination between two probability distributions \cite{Kullback51}.  As applied to (\ref{8}), the mean information is a measure of overlap between the distributions characterized by probability densities $P$ and $e^{-W-V}/Z(V)$.  This quantity is also often referred to as \textit{relative entropy} \cite{Sh08}.  
Murtola et al. \cite{Murtola09} were the first to notice the connection between the functionals used by 
Chayes et al. in \cite{ChCh84,ChChLi84} and the relative entropy considered by Shell in \cite{Sh08}.  
 
Before applying a variational procedure, we need to ensure that $\mathcal{V}_P\neq \emptyset$.  
If $\exp(-W)\in L^{1}(\Lambda^N; d^{N} x)$, then $\mathcal{V}_{P}\neq \emptyset$ because it contains all constant functions.  (Particularly, $\mathcal{V}_{P}\neq \emptyset$ when $W=0$ and $|\Lambda|<\infty$, which is usually the case in the CG simulations described in the introduction.)    
However, not to be limited by this option, we will assume, more generally, that $\mathcal{V}_{P}\neq \emptyset$.  

In this section we will prove that there exists a unique maximizer (up to an additive constant) of $\mathcal{F}_{P}$ on $\mathcal{V}_{P}$.  Moreover, every such maximizer solves the inverse problem.  This settles the question about the existence of solutions on $\mathcal{V}_{P}$.  However, it is not clear why every solution of the inverse problem on $\mathcal{V}_{P}$ should be a maximizer of $\mathcal{F}_{P}$, and so such a solution may not be unique.  The uniqueness problem is more easily solved on a smaller set 
$\mathcal{V}_{\rho _{(m)}}\subset \mathcal{V}_{P}$ (the inclusion will be proven shortly) defined by 
\begin{multline}
\label{10}
\mathcal{V}_{\rho _{(m)}}:=\{V:e^{-V}\in L^{1}(\Lambda
^{N};e^{-W}d^{N}x),\\
V(x_{1},...,x_{N})= 
\sum_{1\leq i_{1}<\cdots <i_{m}\leq N}v(x_{i_{1}},...,x_{i_{m}})\text{ a.e. on $\Lambda^{N}$},\\
\text{where $v$ is a symmetric real-valued measurable function on $\Lambda^m$, 
$v\in L^{1}(\Lambda ^{m};\rho^{(m)}d^{m}x)$}\}.
\end{multline}
Note that 
by Fubini-Tonelli theorem \cite[Theorem 2.39]{Folland99}:
\begin{multline}
\int_{\Lambda^N} V P d^N x=\sum_{1\leq i_{1}<\cdots <i_{m}\leq N} \int_{\Lambda^m} \rho^{(m)}(x_{i_1},...,x_{i_m})
v(x_{i_1},...,x_{i_m})dx_{i_1}\cdots dx_{i_m}\\= 
\binom{N}{m}\int_{\Lambda^m} \rho^{(m)} v d^m x \in \Bbb{R}.
\label{eq15}
\end{multline}
Thus, $\mathcal{V}_{\rho _{(m)}}\subset \mathcal{V}_{P}$.  It is important to note that this inclusion is proper in general.  A counterexample is provided in \cite{Rabier15}. 

We will see that (assuming $\mathcal{V}_{\rho^{(m)}}\neq \emptyset$) every maximizer of $\mathcal{F}_P$ on $\mathcal{V}_{\rho^{(m)}}$ also solves the inverse problem, and if it exists, this maximizer is unique.  Moreover, every solution of the inverse problem on $\mathcal{V}_{\rho^{(m)}}$ is a maximizer of $\mathcal{F}_P$ on this set.  Thus, the uniqueness of a solution on $\mathcal{V}_{\rho^{(m)}}$ (if one exists) is guaranteed.  However, it seems difficult to prove the existence of maximizers of $\mathcal{F}_P$ on $\mathcal{V}_{\rho^{(m)}}$, and so this set may not contain solutions.  Nevertheless, we will prove that $\mathcal{V}_{\rho^{(m)}}=\mathcal{V}_P$ under an additional assumption about the probability density $P$.    Thus, with this additional assumption, the inverse problem has a unique solution on $\mathcal{V}_{\rho^{(m)}}=\mathcal{V}_P$.  This result is summarized in Theorem \ref{Th_ex_un_sol}.  

The functional $\mathcal{F}_P$ is well-defined on $\mathcal{V}_P$ as a positive real number.  Indeed, the numerator in (\ref{8}) is finite and positive, and $Z(V)<\infty$ by definition of $\mathcal{V}_P$.  Also, $V\in L^{1}(\Lambda^{N};P d^{N}x)$, and so $V$ is a.e. finite (using $P>0$ a.e.)  Thus, $Z(V)>0$ since $W$ is a.e. finite as well.  
\begin{rem}
\rm{We mention here in passing that the requirement of $W$ being finite a.e. is actually implied by our other assumptions.  In fact, if $W=\infty$ on a set of positive measure, then $P>0$ a.e. implies that $(W+\log P)_{+}=\infty$ a.e. on this set, and therefore $(W+\log P)_{+}\notin L^{1}(\Lambda^{N};P d^{N}x)$.  Also, if $V\in\mathcal{V}_P$, then  $V$ is a.e. finite.  Thus, if $W=-\infty$ on a set of positive measure, then $e^{-V}\notin L^{1}(\Lambda^N;e^{-W}d^{N}x)$, and so 
$\mathcal{V}_P=\emptyset$.}
\end{rem}  

\begin{lem}
\label{V_P_Convex}
The set $\mathcal{V}_P$ is convex, and $\log \mathcal{F}_P$ is concave on $\mathcal{V}_P$.  More precisely, 
\begin{equation}
\log \mathcal{F}_P(\lambda V_1+ (1-\lambda)V_0) \geq \lambda 
\log \mathcal{F}_P(V_1)+(1-\lambda)\log \mathcal{F}_P(V_0) 
\label{fp_conc}
\end{equation}
for every $\lambda\in(0,1)$ and every $V_0, V_1 \in \mathcal{V}_P$, with equality 
if and only if $V_1-V_0$ is a constant a.e.  In particular, if $U_0, U_1\in \mathcal{V}_P$ are two maximizers of $\mathcal{F}_P$, then $U_1-U_0$ is a constant a.e.  The same statement holds true with $\mathcal{V}_P$ substituted with $\mathcal{V}_{\rho^{(m)}}$.   
\label{lm5}
\end{lem}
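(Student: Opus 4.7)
My plan is as follows. First I note the decomposition
\[
\log \mathcal{F}_P(V) = -\int_{\Lambda^N} V P \, d^N x - \log Z(V),
\]
whose first term is linear in $V$ and therefore contributes additively to both sides of (\ref{fp_conc}). Consequently, proving (\ref{fp_conc}) reduces to proving convexity of $V \mapsto \log Z(V)$, with the strictness/equality statements matched. To get this, I would apply H\"older's inequality with conjugate exponents $p = 1/\lambda$ and $q = 1/(1-\lambda)$ to the pointwise factorization
\[
e^{-W-(\lambda V_1 + (1-\lambda)V_0)} = \bigl(e^{-W-V_1}\bigr)^{\lambda}\bigl(e^{-W-V_0}\bigr)^{1-\lambda},
\]
to obtain
\[
Z(\lambda V_1 + (1-\lambda)V_0) \leq Z(V_1)^{\lambda} Z(V_0)^{1-\lambda},
\]
and then take logarithms to recover (\ref{fp_conc}).

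The same H\"older estimate also does double duty in establishing the convexity of $\mathcal{V}_P$. Given $V_0, V_1 \in \mathcal{V}_P$ and $V := \lambda V_1 + (1-\lambda)V_0$, the bound above already guarantees $Z(V) < \infty$, i.e.\ $e^{-V} \in L^1(\Lambda^N; e^{-W} d^N x)$. The remaining defining conditions of $\mathcal{V}_P$ are preserved by linearity: $V \in L^1(\Lambda^N; P d^N x)$ is immediate, and the form (\ref{3}) is retained with $v := \lambda v_1 + (1-\lambda) v_0$, which is still symmetric, real-valued, and measurable on $\Lambda^m$. The same reasoning applies verbatim to $\mathcal{V}_{\rho^{(m)}}$, since $L^1(\Lambda^m; \rho^{(m)} d^m x)$-integrability of $v$ is likewise preserved by convex combinations.

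For the equality claim I would invoke the equality case of H\"older: since $Z(V_0)$ and $Z(V_1)$ are positive and finite, equality forces $e^{-W-V_1} = c \, e^{-W-V_0}$ a.e.\ for some $c>0$, and using the a.e.\ finiteness of $W$ this is equivalent to $V_1 - V_0$ being a.e.\ equal to a constant. Conversely, if $V_1-V_0$ is a.e.\ constant, direct substitution collapses the inequality to an equality. The statement about two maximizers $U_0, U_1$ of $\mathcal{F}_P$ then follows immediately: evaluating (\ref{fp_conc}) at $\lambda = 1/2$ yields equality, so $U_1 - U_0$ is a.e.\ constant.

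I do not anticipate any genuine obstacle here. The only point that requires a modicum of care is checking that $\mathcal{V}_P$ (and $\mathcal{V}_{\rho^{(m)}}$) is closed under convex combinations, since the integrability of $e^{-V}$ against $e^{-W}d^N x$ is not obviously stable under linear interpolation of the exponent; but the H\"older bound supplies this stability as a byproduct of the main concavity estimate, so no additional input is needed.
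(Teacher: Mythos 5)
Your proof is correct and follows essentially the same route as the paper: both reduce the concavity inequality to convexity of $\log Z$ via H\"older with exponents $1/\lambda$ and $1/(1-\lambda)$, use the same H\"older bound to get closure of $\mathcal{V}_P$ (and $\mathcal{V}_{\rho^{(m)}}$) under convex combinations, and invoke H\"older's equality case together with the a.e.\ finiteness of $W$ (and of $V_0, V_1$) to characterize when equality holds. The only presentational difference is that the paper cites Folland's Theorem 6.2 and remarks on the mutual absolute continuity of $d^N x$ and $e^{-W} d^N x$, whereas you spell out the proportionality $e^{-W-V_1} = c\,e^{-W-V_0}$ directly.
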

\begin{proof}
Let $V_0, V_1\in \mathcal{V}_P$ and $\lambda\in (0,1)$.  Then $\lambda V_1 +(1-\lambda)V_0$ is in $L^1(\Lambda^N; Pd^N x)$ and has the "sum" structure required for membership in $\mathcal{V}_P$.  By Holder's inequality,
\begin{equation}
\label{Holder}
\int_{\Lambda^N} e^{-\lambda V_1 - (1-\lambda)V_0} e^{-W} d^N x \leq ||e^{-V_1}e^{-W}||_{1}^{\lambda} ||e^{-V_0}e^{-W}||_{1}^{(1-\lambda)}< \infty.
\end{equation}
Thus, $\lambda V_1 +(1-\lambda)V_0 \in \mathcal{V}_P$, and so the set 
$\mathcal{V}_P$ is convex.  

From (\ref{8}), $\log \mathcal{F}_P(V)=-\int_{\Lambda^N} V P d^N x-\log Z(V)$.  Since the first term is linear in $V$, inequality (\ref{fp_conc}) holds if and only if \newline $\log Z (\lambda V_1 + (1-\lambda) V_0 )\leq \lambda \log Z(V_1) + (1-\lambda) \log Z(V_0)$.  But this inequality is equivalent to (\ref{Holder}).  Moreover,  (\ref{Holder}) is an equality if and only if $V_1-V_0$ is a.e. a 
constant \cite[Theorem 6.2]{Folland99}.  (We have used the fact that measures $d^{N}x$ and $e^{-W}d^{N}x$ are absolutely continues with respect to each other.)  The proof for $\mathcal{V}_{\rho^{(m)}}$ is the same. 
\end{proof}

\subsection{Existence and uniqueness of maximizers of $\mathcal{F}_P$ on $\mathcal{V}_P$}
This subsection is devoted to the proof of the existence of maximizers of the functional $\mathcal{F}_P$ on the set $\mathcal{V}_P$, except for a single (crucial) issue resolved in \cite{Rabier15}.  We begin with the proof that $\mathcal{F}_{P}$ is bounded.  
\begin{lem}
\label{lmF_Pbdd} The functional $\mathcal{F}_{P}$ is bounded on $\mathcal{V}_{P}.$ More precisely, $0<\mathcal{F}_{P}(V)
\leq e^{\int_{\Lambda ^{N}}(W+\log P)_{+}Pd^{N}x}$ for every $V\in \mathcal{V}_{P}.$
\end{lem}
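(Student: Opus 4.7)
The plan is to obtain the bound via Jensen's inequality applied to $\log$ with respect to the probability measure $P\,d^{N}x$. The positivity $\mathcal{F}_{P}(V)>0$ is immediate from the discussion just before the lemma: the numerator is a positive finite real number and the denominator $Z(V)$ is positive and finite by definition of $\mathcal{V}_{P}$ (together with $W$ and $V$ being a.e.\ finite). So the whole job is to prove the upper bound.

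The starting point is to rewrite
\[
Z(V)=\int_{\Lambda^{N}}e^{-W-V}\,d^{N}x=\int_{\Lambda^{N}}\frac{e^{-W-V}}{P}\,P\,d^{N}x,
\]
which makes sense because $P>0$ a.e. Since $P\,d^{N}x$ is a probability measure and $\log$ is concave, Jensen's inequality gives
\[
\log Z(V)\;\geq\;\int_{\Lambda^{N}}\log\!\left(\frac{e^{-W-V}}{P}\right)P\,d^{N}x
\;=\;-\int_{\Lambda^{N}}VP\,d^{N}x\;-\;\int_{\Lambda^{N}}(W+\log P)P\,d^{N}x,
\]
provided the right side is well-defined. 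Once this is in hand, rearranging and using $(W+\log P)\leq(W+\log P)_{+}$ pointwise yields
\[
\log \mathcal{F}_{P}(V)=-\int_{\Lambda^{N}}VP\,d^{N}x-\log Z(V)\;\leq\;\int_{\Lambda^{N}}(W+\log P)_{+}P\,d^{N}x,
\]
and exponentiation gives the claimed inequality.

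The main obstacle, and the only nontrivial point, is checking that the integrals in the Jensen step are legitimately defined, since assumption (\ref{7a}) only controls the positive part of $(W+\log P)$. My plan is to observe that $VP\in L^{1}$ (by $V\in L^{1}(\Lambda^{N};P\,d^{N}x)$, which is part of the definition of $\mathcal{V}_{P}$) and that $(W+\log P)_{+}P\in L^{1}$ by (\ref{7a}). Together these bound the negative part of $\log(e^{-W-V}/P)\cdot P$, so the integral on the right of the Jensen inequality is well-defined in $(-\infty,+\infty]$. If it equalled $+\infty$, Jensen would force $\log Z(V)=+\infty$, contradicting $V\in\mathcal{V}_{P}$; hence it is finite, which forces $(W+\log P)$ itself to lie in $L^{1}(\Lambda^{N};P\,d^{N}x)$, justifying the rearrangement. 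The final inequality $\int P(W+\log P)\,d^{N}x\leq\int P(W+\log P)_{+}\,d^{N}x$ is then elementary, and exponentiating finishes the proof.
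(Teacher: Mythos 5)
Your proof is correct, and it uses the same essential tool as the paper: Jensen's inequality for the probability measure $P\,d^{N}x$. The difference is a matter of which function you feed into Jensen. The paper applies Jensen with the convex function $t\mapsto e^{-t}$ to $f := V+(W+\log P)_{+}$, which lies in $L^{1}(\Lambda^{N};P\,d^{N}x)$ \emph{automatically} (since $V\in L^{1}(P\,d^{N}x)$ and $(W+\log P)_{+}\in L^{1}(P\,d^{N}x)$ by (\ref{7a})), obtaining
\[
e^{-\int (V+(W+\log P)_{+})P\,d^{N}x}\leq \int e^{-V-(W+\log P)_{+}}P\,d^{N}x\leq \int e^{-V-W}\,d^{N}x=Z(V),
\]
where the last inequality is just $(W+\log P)_{+}\geq W+\log P$. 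This avoids any discussion of whether $W+\log P$ itself is $P$-integrable. You instead apply Jensen (via $-\log$) to $g:=e^{-W-V}/P$, which keeps $W+\log P$ rather than $(W+\log P)_{+}$ in the exponent on the right side; this forces the extra step of arguing, by contradiction with $Z(V)<\infty$, that the resulting integral is finite and hence that $W+\log P\in L^{1}(\Lambda^{N};P\,d^{N}x)$. Both routes are rigorous and yield the same bound; yours has the small bonus of establishing $P$-integrability of $W+\log P$ as a byproduct, while the paper's choice of integrand sidesteps the integrability question entirely and is consequently shorter. One minor point: when you split $\int(V+W+\log P)P\,d^{N}x=\int VP\,d^{N}x+\int(W+\log P)P\,d^{N}x$, the justification for this additivity is that $(W+\log P)_{-}\leq(V+W+\log P)_{-}+|V|$, both terms of which you have just shown to be in $L^{1}(P\,d^{N}x)$; it would be worth stating this explicitly, since additivity of the integral for merely extended-valued integrands is not automatic.
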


\begin{proof}
By (\ref{7a}) and Jensen's inequality for the measure $P d^N x$ \cite[Theorem 3.3]{RudinRCA},
\begin{multline}
e^{-\int_{\Lambda^N} (V+(W+\log P)_{+})P d^N x}\leq \int_{\Lambda^N} e^{-V-(W+\log P)_{+}}P d^N x \\
\leq\int_{\Lambda^N} e^{-V-W-\log P} P d^N x=\int_{\Lambda^N} e^{-V-W} d^N x.
\end{multline}
Thus, $0<\mathcal{F}_{P}(V)\leq e^{\int_{\Lambda ^{N}}(W+\log P)_{+}Pd^{N}x}<\infty$.
\end{proof}
From now on, we set 
\begin{equation}
M:=\sup_{V\in \mathcal{V}_{P}}\mathcal{F}_{P}(V)\in (0,\infty ).  \label{M}
\end{equation}
Let $(V_n)\in \mathcal{V}_P$ be a maximizing sequence for  $\mathcal{F}_{P}$, i.e. $\lim_{n\rightarrow\infty}\mathcal{F}_{P}(V_n)=M$.  (Recall that, by definition of $\mathcal{V}_P$ in (\ref{9}), 
$V_n(x_1,...,x_N)=\sum_{1\leq i_{1}<\cdots <i_{m}\leq
N}v_n(x_{i_{1}},...,x_{i_{m}})$ a.e.,  
where $v_n$ is a symmetric real-valued measurable function on 
$\Lambda^m$.) 
If $V\in \mathcal{V}_{P}$ and $C\in\Bbb{R}$, then \newline 
${V+C=\sum_{1\leq i_1<\cdots<i_m\leq N}\left [v(x_{i_1},...,x_{i_m}) + {\binom{N}{m}}^{-1}C\right ]\in \mathcal{V}_{P}}$, and $\mathcal{F}_{P}(V+C)=\mathcal{F}_{P}(V)$.  Therefore, by adding a suitable constant to each $V_n$, it can be assumed that $Z(V_n)=1$.  Thus ${1=\int_{\Lambda^N} e^{-V_n}   e^{-W} d^N x}$, i.e. $e^{-V_n/2}\in L^2(\Lambda^N; e^{-W} d^N x)$, and 
$|| e^{-V_n/2}||_{2,e^{-W} d^N x}  =1$.  Therefore, by reflexivity of $L^2(\Lambda^N; e^{-W} d^N x)$, there is a subsequence (still denoted by $(V_n)$) and $\Pi \in L^2(\Lambda^N; e^{-W} d^N x)$ such that $(e^{-V_n/2})$ converges weakly to $\Pi$ \cite[Theorem 3.18]{Brezis11}. 

\begin{lem}
\label{Pi_nConL2}
$e^{-V_n/2}\rightarrow \Pi$ in $L^2(\Lambda^N; e^{-W} d^N x)$.  (In particular, $\int_{\Lambda^N} \Pi^2 e^{-W}=1$.)
\end{lem}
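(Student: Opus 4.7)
The plan is to upgrade weak $L^2$ convergence to strong convergence by showing $\|\Pi\|_2 = 1$, since in a Hilbert space, $f_n \rightharpoonup f$ together with $\|f_n\| \to \|f\|$ forces $f_n \to f$ strongly. Because $\|e^{-V_n/2}\|_{L^2(e^{-W}d^N x)} = 1$ for every $n$, the task reduces to proving that $\|\Pi\|_{L^2(e^{-W}d^N x)}^2 = 1$. Weak lower semicontinuity of the norm immediately gives $\|\Pi\|_2^2 \leq 1$; the real content of the lemma is the reverse inequality $\|\Pi\|_2^2 \geq 1$.

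To establish that reverse inequality, I would invoke Mazur's lemma to extract convex combinations
\[
g_n := \sum_{k=n}^{N_n} \alpha_{nk}\, e^{-V_k/2},\qquad \alpha_{nk}\geq 0,\quad \sum_{k=n}^{N_n}\alpha_{nk}=1,
\]
that converge to $\Pi$ \emph{strongly} in $L^2(\Lambda^N;e^{-W}d^N x)$, and consider the parallel combinations $W_n := \sum_{k=n}^{N_n} \alpha_{nk}V_k$. Convexity of $\mathcal{V}_P$ (Lemma \ref{lm5}) together with the pointwise estimate $e^{-W_n/2} \leq g_n$ (Jensen's inequality applied to the convex map $t\mapsto e^{-t/2}$) places $W_n$ in $\mathcal{V}_P$, and the same pointwise bound yields
\[
Z(W_n) \;=\; \int_{\Lambda^N} e^{-W_n}e^{-W}\,d^N x \;\leq\; \int_{\Lambda^N} g_n^{\,2}\,e^{-W}\,d^N x \;\longrightarrow\; \|\Pi\|_2^{\,2}.
\]
Next I would use the normalization $Z(V_k)=1$, which gives $\int V_k P\,d^N x = -\log\mathcal{F}_P(V_k)\to -\log M$, and hence by linearity $\int W_n P\,d^N x \to -\log M$. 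Since $\mathcal{F}_P(W_n)\leq M$, taking logarithms gives $\log Z(W_n) \geq -\int W_n P\,d^N x-\log M$, whose right-hand side tends to $0$. Combining both ends,
\[
1 \;\leq\; \liminf_{n\to\infty} Z(W_n) \;\leq\; \limsup_{n\to\infty} Z(W_n) \;\leq\; \|\Pi\|_2^{\,2},
\]
which delivers $\|\Pi\|_2^{\,2}\geq 1$ and finishes the proof.

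The main obstacle here is not any single estimate but the \emph{indirect} route one is forced to take: we do not yet know that $\Pi>0$, let alone that the putative candidate $-2\log \Pi$ carries the sum structure required for membership in $\mathcal{V}_P$, so the variational information contained in the maximizing sequence must be transferred to $\Pi$ solely through strong $L^2$ limits of well-chosen convex combinations. It is precisely the convexity of $\mathcal{V}_P$, the convexity of $t\mapsto e^{-t/2}$, and the harmless normalization $Z(V_n)=1$ that allow the Mazur combinations to bridge the gap between weak and strong convergence.
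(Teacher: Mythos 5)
Your proof is correct, and it takes a genuinely different route to the same conclusion. Both the paper and your proposal begin identically: reduce strong $L^2$ convergence to the norm identity $\|\Pi\|_2 = 1$, dispose of $\|\Pi\|_2 \leq 1$ by weak lower semicontinuity, and invoke Mazur's theorem to produce convex combinations of the $e^{-V_k/2}$ that converge strongly to $\Pi$. The divergence occurs in how the norms of those Mazur combinations are bounded from below. The paper bounds $\bigl\|\sum_k \lambda_k^{(n)} e^{-V_k/2}\bigr\|_2^2$ from below by the \emph{smallest} inner product $\langle e^{-V_{j_n}/2}, e^{-V_{k_n}/2}\rangle$ appearing in the bilinear expansion, and then rewrites that inner product as $Z\bigl(\tfrac{V_{j_n}+V_{k_n}}{2}\bigr)$, which is controlled from below using the maximizing property of the sequence. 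You instead apply Jensen's inequality pointwise to the convex map $t\mapsto e^{-t/2}$, getting $e^{-(\sum_k \alpha_{nk}V_k)/2}\leq \sum_k \alpha_{nk}e^{-V_k/2}$, which turns the squared norm of the Mazur combination directly into an upper bound for $Z$ of the \emph{full} convex average of the $V_k$, and you control that $Z$ from below by the same maximizing property. Your version also replaces the paper's explicit $\varepsilon$-bookkeeping (fix $n_0$ once, combine tails from $n_0$) with a cleaner $\liminf/\limsup$ sandwich over combinations with indices $\geq n$. The two approaches are of comparable length; yours exploits the exponential structure more systematically through one Jensen step, whereas the paper's minimum-pair trick is more elementary Hilbert-space manipulation. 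Two minor remarks: your symbol $W_n$ for the averaged potential unfortunately collides with the paper's background potential $W$, so a different letter would be prudent; and the phrase attributing membership of the averaged potential in $\mathcal{V}_P$ partly to the pointwise Jensen estimate is a slight overstatement — convexity of $\mathcal{V}_P$ alone (Lemma \ref{lm5}) already handles that, and the Jensen bound is needed only for the $Z$ estimate. Neither affects the validity of the argument.
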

\begin{proof}
For convenience, let us define $\Pi_n:=e^{-V_n/2}$.  It suffices to show that ${1=\lim_{n\rightarrow\infty}||\Pi_n ||_{2,e^{-W}d^N x}= ||\Pi ||_{2,e^{-W}d^N x}}$.  The weak convergence implies 
\begin{equation}
1=\lim_{n\rightarrow\infty}||\Pi_n ||_{2,e^{-W}d^N x}\geq ||\Pi ||_{2,e^{-W}d^N x}.
\label{npi_leq1}
\end{equation}
Let $\varepsilon >0$ be fixed.  There is $n_0\in \Bbb{N}$ such that for each $n\geq n_0$,
\begin{equation}
e^{-\int_{\Lambda^N} V_n P d^N x}=\mathcal{F}_P(V_n) > M(1-\varepsilon).
\label{FgM}
\end{equation}
By Mazur's theorem \cite[Theorem 3.13]{RudinFA}, there is a sequence of convex combinations $(\widetilde{\Pi}_n | n\geq n_0)$, i.e.
\begin{equation}
\widetilde{\Pi}_n=\sum_{k=n_0}^n \lambda_k^{(n)}\Pi_k,\quad
\lambda_k^{(n)}\geq 0\quad\forall ~n_0\leq k\leq n,\quad\text{and 
$\sum_{k=n_0}^n \lambda_k^{(n)}=1$},
\end{equation}
such that ${\lim_{n\rightarrow\infty} || \widetilde{\Pi}_n -\Pi ||_{2,e^{-W} d^N x}}=0$.
For every $n\geq n_0$ choose $j_{n},k_{n}\in \{n_0,...,n\}$ such that 
$\langle\Pi _{j_{n}},\Pi _{k_{n}}\rangle\leq \langle\Pi _{j},\Pi _{k}\rangle$ for every $j,k\in
\{n_0,...,n\},$ where $\langle\cdot ,\cdot \rangle$ denotes the inner product of $ L^{2}(\Lambda ^{N};e^{-W}d^{N}x).$  Then, 
\begin{multline}
\label{tildePn_norm}
||\widetilde{\Pi} _{n}||_{2,e^{-W}d^{N}x}^{2}=\\
\sum_{j,k=n_0}^{n}\lambda_j^{(n)}\lambda _{k}^{(n)}\langle\Pi _{j},\Pi _{k}\rangle\geq \left( \sum_{j,k=n_0}^{n}\lambda
_{j}^{(n)}\lambda _{k}^{(n)}\right) \langle\Pi _{j_{n}},\Pi _{k_{n}}\rangle=\langle\Pi _{j_{n}},\Pi
_{k_{n}}\rangle.
\end{multline}
Since $\langle\Pi _{j_{n}},\Pi _{k_{n}}\rangle=\int_{\Lambda ^{N}}e^{-\frac{
V_{j_{n}}+V_{k_{n}}}{2}-W}d^{N}x,$ (\ref{tildePn_norm}) reads
\begin{equation}
||\widetilde{\Pi} _{n}||_{2,e^{-W}d^{N}x}^{2}\geq Z(Y_{n}),  \label{tPiZY}
\end{equation}
where $Y_{n}:=\frac{V_{j_{n}}+V_{k_{n}}}{2}\in \mathcal{V}_P$ (because $\mathcal{V}_P$ is convex by Lemma \ref{lm5}).  Therefore, using (\ref{FgM}), 
\begin{equation}
M\geq \frac{e^{-\int_{\Lambda^N} (V_{j_{n}}+V_{k_{n}})/2}P d^N x}{Z(Y_n)}=
\frac{\left (\mathcal{F}_P(V_{j_{n}})\mathcal{F}_P(V_{k_{n}})\right )^{\frac{1}{2}}}{Z(Y_n)}>\frac{M(1-\varepsilon)}{Z(Y_n)}.
\label{MgM}
\end{equation}
Inequalities (\ref{tPiZY}) and (\ref{MgM}) imply that ${||{\widetilde{\Pi}} _{n}||_{2,e^{-W}d^{N}x}^{2}\geq Z(Y_{n})>1-\varepsilon}$. Therefore, using (\ref{npi_leq1}), ${1\geq ||{\Pi}||_{2,e^{-W}d^{N}x}^{2}=\lim_{n\rightarrow\infty} ||\widetilde{\Pi} _{n}||_{2,e^{-W}d^{N}x}^{2}\geq 1-\varepsilon}$ for any $\varepsilon>0$.  Thus, 
$||{\Pi}||_{2,e^{-W}d^{N}x}||=1$.
\end{proof}
It follows from Lemma \ref{Pi_nConL2} (and the fact that  $d^N x$ and $e^{-W} d^N x$ have the same zero measure 
sets) that there is a subsequence (still denoted $(V_n)$) such that ${e^{-V_{n}/2}\rightarrow\Pi}$ a.e. on $\Lambda^N$ \cite[Theorem 3.12]{RudinRCA}.  Let us define a co-null set 
\begin{multline}
\label{E_def}
\tilde{E}:=\{(x_1,...,x_N)\in\Lambda^N: 
\Pi(x_1,...,x_N)\in \Bbb{R}, \\
e^{-V_{n}(x_1,...,x_N)/2}\rightarrow\Pi(x_1,...,x_N)\}.
\end{multline}
In particular, $\Pi\in [0,\infty)$ on $\tilde{E}$.  Let us also define a function $U$ on $\Lambda^N$ as
\begin{equation}
\label{defU}
U(x):=\left\{\begin{array}{ll}
-2\log \Pi(x) &\text{if $x\in \tilde{E}\cap\{y\in\Lambda^N: \Pi(y)>0$}\},\\
\infty & \text{if $x\in \tilde{E}\cap\{y\in\Lambda^N: \Pi(y)=0$}\},\\
0 & \text{if $x\in \tilde{E}^{c}$}.  
\end{array} \right .
\end{equation}
It is easy to confirm that $U$ is measurable, and $V_n\rightarrow U$ on $\tilde{E}$.  
We will prove that $U\in \mathcal{V}_P$, and $\mathcal{F}_P(U)=M$.  We have
\begin{equation}
1=\int_{\Lambda^N}\Pi^2 e^{-W}d^N x=\int_{\tilde{E}} e^{-U}e^{-W}d^N x=\int_{\Lambda^N} e^{-U}e^{-W}d^N x.
\label{UinVP_1}
\end{equation}    
Therefore, $e^{-U}\in L^1(\Lambda^N; e^{-W} d^N x)$.  It remains to prove that $U\in L^1(\Lambda^N; P d^N x)$, $U(x_1,...,x_N)=\sum_{1\leq i_1<\cdots<i_m\leq N} u(x_{i_1},...,x_{i_m})$ a.e. for some finite and symmetric measurable function $u$ on $\Lambda^m$ (i.e. $U\in \mathcal{V}_P$), and $\mathcal{F}_P(U)=M$.  
\begin{lem}
\label{U_inL^1}
The function $U$ defined in (\ref{defU})  satisfies $U_{-} \in L^1(\Lambda^N; P d^N x)$.
\end{lem}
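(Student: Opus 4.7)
The plan is to apply Fatou's lemma to the a.e. convergent sequence $(V_n)_- \to U_-$, and to bound $\int (V_n)_- P\, d^N x$ uniformly in $n$ by exploiting the fact that $e^{-V_n-W}/P$ is a probability density against $P\,d^N x$. Since $V_n \to U$ pointwise on the co-null set $\tilde E$ defined in (\ref{E_def}), the nonnegative functions $(V_n)_-$ converge to $U_-$ a.e., so Fatou's lemma gives $\int U_- P\, d^N x \le \liminf_n \int (V_n)_- P\, d^N x$. Hence it suffices to obtain a uniform bound on $\int (V_n)_- P\, d^N x$.

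The key step is to recognize the right decomposition of $V_n$. Set $\rho_n := e^{-V_n-W}/P$, which is well-defined a.e. since $P>0$ a.e. The normalization $Z(V_n) = 1$ gives $\int \rho_n\, P\, d^N x = \int e^{-V_n-W} d^N x = 1$, so $\rho_n$ is the Radon-Nikodym density of the probability measure $\mu_n = e^{-V_n-W} d^N x$ with respect to $P\, d^N x$. Solving for $V_n$ yields
\begin{equation*}
V_n = -\log \rho_n - W - \log P,
\end{equation*}
so that pointwise $(V_n)_- = (\log \rho_n + W + \log P)_+ \le (\log \rho_n)_+ + (W + \log P)_+$.

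Integrating this inequality against $P\, d^N x$ and using the elementary bound $(\log t)_+ \le t$ for $t \ge 0$, I would conclude
\begin{equation*}
\int (V_n)_- P\, d^N x \le \int (\log \rho_n)_+ P\, d^N x + \int (W+\log P)_+ P\, d^N x \le \int \rho_n P\, d^N x + C_0 = 1 + C_0,
\end{equation*}
where $C_0 := \int (W+\log P)_+ P\, d^N x$ is finite by assumption (\ref{7a}). Combining with Fatou's lemma from the first step then gives $\int U_- P\, d^N x \le 1 + C_0 < \infty$, as required. The only real content in the argument is spotting the decomposition $V_n = -\log \rho_n - (W+\log P)$; once this is in hand, the bound on $\int (V_n)_- P\, d^N x$ follows immediately from the fact that $\rho_n$ integrates to $1$ against $P\, d^N x$ together with hypothesis (\ref{7a}). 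I do not anticipate any additional obstacle.
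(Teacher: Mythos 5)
Your proof is correct, but it takes a different path than the paper. The paper establishes the identity $\int_{\Lambda^N} e^{-U-W}\,d^N x = 1$ (equation (\ref{UinVP_1}), which requires the full strength of the $L^2$ convergence from Lemma \ref{Pi_nConL2}) and then estimates $\int U_- P\,d^N x$ directly: writing $1 = \int e^{-U-W-\log P}P \geq \int e^{-U-(W+\log P)_+}P$, applying $e^t \geq t_+$ to get $\int (U+(W+\log P)_+)_- P \leq 1$, and finishing with $(t+s)_- \leq t_- + s_-$. You instead work at the level of the sequence: you use the normalization $Z(V_n)=1$ to view $\rho_n = e^{-V_n-W}/P$ as a probability density relative to $P\,d^N x$, decompose $V_n = -\log\rho_n - (W+\log P)$, obtain the uniform bound $\int (V_n)_- P\,d^N x \leq 1 + C_0$ via $(\log t)_+\le t$, and then pass to the limit by Fatou using only the a.e. convergence $V_n\to U$ on $\tilde E$. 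The two arguments lean on the same elementary inequalities (your $(\log t)_+\le t$ is the paper's $e^t\ge t_+$ in disguise, and both use subadditivity of the negative/positive part), but yours avoids invoking (\ref{UinVP_1}) in this lemma, trading it for Fatou applied to the a.e. convergent sequence; both routes ultimately rest on Lemma \ref{Pi_nConL2} (to extract the a.e. convergent subsequence) and on hypothesis (\ref{7a}), and both land on the same numerical bound $1+C_0$. Your version is a clean, self-contained alternative.
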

\begin{proof}
It follows from (\ref{UinVP_1}) that 
\begin{equation}
1=\int_{\Lambda^N} e^{-U-W}d^N x = \int_{\Lambda^N} e^{-U-W-\log P}Pd^N x \geq \int_{\Lambda^N} e^{-U-(W+\log P)_{+}}P d^N x.
\label{UinVP_2}
\end{equation}
The relations $e^t\geq t_{+}$ and $(-t)_{+}=t_{-}$ for $t\in \overline{\Bbb{R}}$ together with (\ref{UinVP_2}) imply that \newline $\int_{\Lambda^N} \left ( U + (W+\log P)_{+}\right )_{-} P d^N x \leq 1$.  Particularly, \newline 
$\left (U + (W+\log P)_{+}\right )_{-} \in L^1(\Lambda^N; P d^N x)$.  Next, using $(t+s)_{-}\leq t_{-}+s_{-}$ for $t\in \overline{\Bbb{R}}$ and $s\in{\Bbb{R}}$, with $t= U + (W+\log P)_{+}$ and $s=-(W+\log P)_{+}$, we obtain $U_{-}\leq \left (U + (W+\log P)_{+}\right )_{-} + 
(W+\log P)_{+}$.  Therefore, by (\ref{7a}) and the above, $U_{-} \in L^1(\Lambda^N; P d^N x)$.
\end{proof}
\begin{lem}
\label{U_inL^1_2}
The function $U$ defined in (\ref{defU})  is in $\mathcal{V}_P$.  Moreover, $\mathcal{F}_P(U)=M$ with $M$ from (\ref{M}).  
\end{lem}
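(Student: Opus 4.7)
The plan is to verify three things: that $U$ has the required symmetric sum decomposition, that $U\in L^1(\Lambda^N;P\,d^Nx)$, and that $\mathcal{F}_P(U)=M$. The remaining membership requirements for $\mathcal{V}_P$ are already at hand: $e^{-U}\in L^1(\Lambda^N;e^{-W}d^Nx)$ (equivalently $Z(U)=1$) from (\ref{UinVP_1}), and $U_{-}\in L^1(P\,d^Nx)$ from Lemma \ref{U_inL^1}.

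For the sum decomposition I would appeal directly to the measure-theoretic results of \cite{Rabier15}. The input is that $V_n\to U$ a.e.\ on the co-null set $\tilde{E}$ and that each $V_n$ is a symmetric $N$-variable function of the generalized $N$-mean form $\sum_{1\leq i_1<\cdots<i_m\leq N}v_n(x_{i_1},\ldots,x_{i_m})$; the conclusion that $U$ is again of this form for some symmetric measurable $u$ on $\Lambda^m$ is precisely what Navrotskaya and Rabier establish in \cite{Rabier15}.

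For the $L^1(P\,d^Nx)$ integrability of $U$, set $g:=(W+\log P)_{+}\in L^1(P\,d^Nx)$ by (\ref{7a}). The elementary pointwise inequalities $(V_n)_{-}\leq (V_n+g)_{-}+g$ and $s_{-}\leq e^{-s}$, combined with $g\geq W+\log P$ and $Z(V_n)=1$, yield the uniform bound $\int (V_n)_{-}P\,d^Nx\leq 1+\int g\,P\,d^Nx$. Since $\int V_n P\,d^Nx=-\log\mathcal{F}_P(V_n)\to -\log M$ is bounded, $\int (V_n)_{+}P\,d^Nx$ is also uniformly bounded, and Fatou applied to $(V_n)_{+}\to U_{+}$ a.e.\ gives $U_{+}\in L^1(P\,d^Nx)$.

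For the equality $\mathcal{F}_P(U)=M$, the bound $\mathcal{F}_P(U)\leq M$ is automatic. Since $Z(U)=1$, the reverse direction reduces to $\int UP\,d^Nx\leq -\log M$. I would exploit the pointwise inequality $V_n+g+e^{-V_n-g}\geq 0$ and apply Fatou to obtain
\begin{equation*}
\int UP\,d^Nx + \int g P\,d^Nx + \int e^{-U-g}P\,d^Nx \;\leq\; \liminf_n\Bigl(\int V_n P\,d^Nx + \int g P\,d^Nx + \int e^{-V_n-g}P\,d^Nx\Bigr),
\end{equation*}
and then establish $\int e^{-V_n-g}P\,d^Nx\to \int e^{-U-g}P\,d^Nx$ by rewriting the integrand as $\Pi_n^{2}\min(1,Pe^{W})\,e^{-W}$ and invoking $L^1(e^{-W}d^Nx)$-convergence of $\Pi_n^{2}$, which follows from the strong $L^2$-convergence $\Pi_n\to\Pi$ of Lemma \ref{Pi_nConL2} together with the boundedness of $\min(1,Pe^{W})\leq 1$. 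Cancellation of the common finite term then yields the required estimate. The main obstacle is the first step: persistence of the additive $m$-body structure under a.e.\ limits is a genuinely nontrivial measure-theoretic problem and is exactly what motivated the separate treatment in \cite{Rabier15}; once it is available, the remaining steps are routine Fatou and dominated-convergence arguments built on the $L^2$-convergence of $\Pi_n$.
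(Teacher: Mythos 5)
Your proof is correct but takes a genuinely different route from the paper's. The paper establishes $U_{+}\in L^{1}(P\,d^{N}x)$ and $\mathcal{F}_{P}(U)=M$ via a single chained estimate built on Jensen's inequality, applied to the truncated integrand $(U^{k}-V_{n})/2-W-\log P+\phi^{\ell}$ with $U^{k}=\min(U,k)$ and $\phi^{\ell}=\chi_{S}\min(W+\log P,\ell)$; it then passes to the limit in $n$ using only the \emph{weak} $L^{2}(e^{-W}d^{N}x)$-convergence of $\Pi_{n}$ against the test function $e^{U^{k}/2+\phi^{\ell}}$, and afterwards takes $\ell\to\infty$ (dominated convergence) and $k\to\infty$ (monotone convergence). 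You instead exploit the \emph{strong} $L^{2}$-convergence already established in Lemma \ref{Pi_nConL2}: for $U_{+}\in L^{1}$ you bound $\int(V_{n})_{-}P$ uniformly using $(V_{n}+g)_{-}\leq e^{-(V_{n}+g)}$ and $Z(V_{n})=1$, combine with the boundedness of $\int V_{n}P\to-\log M$, and apply Fatou to $(V_{n})_{+}$; for $\mathcal{F}_{P}(U)\geq M$ you apply Fatou to $V_{n}+g+e^{-V_{n}-g}\geq 0$, and the key convergence $\int e^{-V_{n}-g}P\to\int e^{-U-g}P$ you deduce from $e^{-V_{n}-g}P=\Pi_{n}^{2}\min(1,Pe^{W})e^{-W}$, the $L^{1}(e^{-W}d^{N}x)$-convergence $\Pi_{n}^{2}\to\Pi^{2}$ (Cauchy--Schwarz plus strong $L^{2}$-convergence), and the bound $\min(1,Pe^{W})\leq 1$. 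Both proofs lean identically on \cite{Rabier15} (Theorem 2.3 and Lemma 2.1) for the persistence of the $m$-body sum structure under a.e.\ limits, which is indeed the genuinely hard step. Your route buys a cleaner argument with fewer nested limits and no auxiliary truncations of $U$, at the cost of invoking the strong rather than merely weak $L^{2}$-convergence (which, however, the paper has already proved and pays for in Lemma \ref{Pi_nConL2}); it also has the pleasant feature that the $\mathcal{F}_{P}(U)=M$ computation independently re-derives $U_{+}\in L^{1}(P\,d^{N}x)$ by cancellation of the finite terms $\int gP$ and $\int e^{-U-g}P$.
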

\begin{proof}
Since $U_{-}\in L^1(\Lambda^N;P d^N x)$ by Lemma \ref{U_inL^1}, $U\in L^1(\Lambda^N;P d^N x)$ if and only if $U_{+}\in L^1(\Lambda^N;P d^N x)$.  
This is proved below through an estimate that will also yield $\mathcal{F}_P(U)=M$ (after using Theorem 2.3 in 
\cite{Rabier15} to show that $U\in \mathcal{V}_P$).  

Let $S\subset\Lambda^N$ be the co-null set on which $W\in\Bbb{R}$ and $P\in (0,\infty)$.  For $k,\ell \in \Bbb{N},$ define 
\begin{equation}
U^{k}:=\min (U,k),\qquad \phi^{\ell }:=\chi_{S}\min (W+\log P,\ell ).  \label{UkPhil}
\end{equation}
We have that 
\begin{equation}
\label{Phi^ell_in_L^1}
0\leq W+\log P-\phi^{\ell} \leq (W+ \log P)_{+} \quad \text{on $S$}
\end{equation}
because on this set 
$W+\log P-\phi^{\ell}=0$ when $W +\log P\leq\ell$, and 
$0<W+\log P-\phi^{\ell}=(W+\log P)_{+}-\ell < (W+\log P)_{+}$ 
when $W +\log P>\ell$.   
Particularly, $W+\log P-\phi^{\ell}\in L^1(\Lambda^N; P d^N x )$ by (\ref{7a}).  
Moreover, 
\begin{equation}
\lim_{\ell \rightarrow \infty} \int_{\Lambda^N}(W+\log P-\phi^{\ell})P=0
\label{eqn1}
\end{equation}
by dominated convergence.  Since also $U^{k}=-U_{-}+U_{+}^{k}\in L^1(\Lambda^N; P d^N x )$ by Lemma (\ref{U_inL^1}), 
it follows that 
$(U^{k}-V_n)/2 -W-\log P + \phi^{\ell }\in L^1(\Lambda^N; P d^N x )$.  Therefore, by Jensen's inequality
\begin{multline}
e^{-\int_{\Lambda^N} \frac{V_n}{2} P d^N x}e^{\int_{\Lambda^N}\left (\frac{U^{k}}{2}-W-\log P + \phi^{\ell }\right ) P d^N x}\leq \\
\int_{\Lambda^N} e^{-\left (\frac{(V_n-U^{k})}{2} +W+\log P - \phi^{\ell }\right )} P d^N x
=
\int_{\Lambda^N} e^{-\frac{V_n}{2}}e^{\frac{U^{k}}{2}+\phi ^{\ell }}e^{-W} d ^N x.
\label{eqn2}
\end{multline}
Next, we will show that $e^{\frac{U^{k}}{2}+\phi _{\ell }}\in L^2(\Lambda^N;e^{-W} d^N x)$.  Indeed, $\phi^{\ell }\leq W + \log P$ on $S$ implies 
$e^{\phi^{\ell }-W}\leq P$ a.e., and therefore, $e^{\phi ^{\ell }-W}\in L^1(\Lambda^N; d^N x)$.  Also, $\phi ^{\ell }\leq \ell$, 
$U^k\leq k$ imply $U^k +2 \phi ^{\ell }-W\leq k+\ell + \phi ^{\ell }-W$.  Therefore, $e^{U^k+2\phi ^{\ell }-W}\leq e^{k+\ell}e^{\phi ^{\ell }-W}\in
L^1(\Lambda^N; d^N x)$.  Equivalently, 
\begin{equation}
e^{\frac{U^k}{2}+\phi ^{\ell }} \in L^2(\Lambda^N; e^{-W} d^N x).  
\label{eqn3}
\end{equation}

With $k,\ell$ being held fixed, let $n\rightarrow\infty$.  Then, the leftmost side of (\ref{eqn2}) converges to 
$\sqrt{M}e^{\int_{\Lambda^N}(\frac{U^k}{2}-W-\log P +\phi ^{\ell })P d^N x}$.  By (\ref{eqn3}) and weak convergence, the rightmost side of 
(\ref{eqn2}) converges to 
\begin{equation}
\int_{\Lambda^N} e^{-\frac{(U-U^k)}{2}}e^{\phi ^{\ell }-W} d^N x=\int_{\Lambda^{N}} e^{-\frac{(U-U^k)}{2}}e^{-(W+\log P -\phi ^{\ell })} P d^N x\leq 1, 
\label{eqn4}
\end{equation}
where the last inequality follows by $U-U^k\geq 0$, $W+\log P -\phi _{\ell }\geq 0$, and $\int_{\Lambda^N} P=1$.    This yields
\begin{equation}
\label{eqn5}
\sqrt{M}e^{\int_{\Lambda^N}\frac{U^k}{2} P d^N x} e^{-\int_{\Lambda^N} (W+\log P-\phi _{\ell } )P d^N x}\leq 1.
\end{equation}

With $k$ being held fixed, let $\ell\rightarrow\infty$ in (\ref{eqn5}).  Then, by (\ref{eqn1}), we obtain 
$\sqrt{M}e^{\int_{\Lambda^N}\frac{U^k}{2} P d^N x}\leq 1$.  Equivalently, using $U^k=-U_{-}+U_{+}^k$,
\begin{equation}
M\leq e^{-\int_{\Lambda^N}U^kP d^N x}=e^{-\int_{\Lambda^N}U_{+}^k P d^N x}e^{\int_{\Lambda^N}U_{-} P d^N x}.
\label{eqn6}
\end{equation}
Now, $U_{+}^k$ is increasing to $U_{+}$ pointwise.  Thus, by monotone convergence, \newline
$\lim_{k\rightarrow \infty}\int_{\Lambda^N}U_{+}^k P d^N x=\int_{\Lambda^N}U_{+}P d^N x$.  Taking $k\rightarrow\infty$ in (\ref{eqn6}) results into 
\begin{equation}
0<M\leq e^{-\int_{\Lambda^N}U_{+} P d^N x}e^{\int_{\Lambda^N}U_{-} P d^N x},
\label{eqn7}
\end{equation}
which together with Lemma \ref{U_inL^1} shows that 
$U_{+}$ (and therefore $U$) belongs to \newline $L^1(\Lambda^N; P d^N x)$.  In particular, since $P>0$ a.e., 
$U$ is a.e. finite.  

To finish the proof, we need to show that $U\in \mathcal{V}_P$.  Since $V_n\rightarrow U$ a.e., it follows from Theorem 2.3 in \cite{Rabier15} that there is function 
$u$ on $\Lambda^m$ such that $v_n\rightarrow u$ a.e. 
on $\Lambda^m$.   In particular, $u$ is measurable, and 
(using Lemma 2.1 in \cite{Rabier15}) 
$U(x_1,...,x_N)=\sum_{1\leq i_1<\cdots<i_m\leq N} u(x_{i_1},...,x_{i_m})$ a.e. on $\Lambda^N$.  Moreover, redefining $u$ on a set 
of zero measure, it can be assumed that $u$ is symmetric 
real-valued function on $\Lambda^m$.  Thus, $U$ has the sum structure required for membership 
in $\mathcal{V}_P$, and $U\in\mathcal{V}_P$.  Then, 
(\ref{eqn7})  and (\ref{UinVP_1})  
imply that $\mathcal{F}_P(U)=e^{-\int_{\Lambda^N}U P d^N x}=M$.  
\end{proof}
  
We can now finish the proof of the existence and uniqueness of maximizers of $\mathcal{F}_P$ on $\mathcal{V}_P$.  In fact, all the steps of this proof are already completed, and we essentially just need to cite the previous results.  For convenience, we repeat all the assumptions needed for the validity of the next theorem.  Note that the symmetry of $P$ and $W$ has not been used yet, and will not be needed in the reminder of the proof of the existence and uniqueness of maximizers.   
\begin{theor}
\label{th_ex_un}
Let $\rho^{(m)}$ be the $m$-variable reduction of an a.e. positive probability density $P$, i.e. $\rho^{(m)}$ is given by (\ref{4}), and let $W$ be an a.e. finite measurable potential on $\Lambda^N$.  Suppose that 
$(W+\log{P})_{+}\in L^{1}(\Lambda^N; Pd^N x)$.  If $\mathcal{V}_P\neq \emptyset$, there is $U\in\mathcal{V}_P$ (with 
$U(x_1,...,x_N)=\sum_{1\leq i_1<\cdots<i_m\leq N} u(x_{i_1},...,x_{i_m})$ a.e. ) such that 
$\mathcal{F}_P(V)\leq\mathcal{F}_P(U)$ for every $V\in\mathcal{V}_P$.  
Furthermore, if $U_1$ and $U_0$ are two maximizers of $\mathcal{F}_P$ on $\mathcal{V}_P$ (or on $\mathcal{V}_{\rho^{(m)}}$, if they exist), then $U_1-U_0$ is a constant a.e., and the same is true for $u_1-u_0$. 
\end{theor}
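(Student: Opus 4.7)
The plan is to assemble the existence part of the theorem from the preceding discussion and then deduce both uniqueness statements from Lemma \ref{V_P_Convex} together with a structural fact about symmetric sums.

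For existence, since $\mathcal{V}_P\neq\emptyset$ and $\mathcal{F}_P$ is positive and bounded on $\mathcal{V}_P$ by Lemma \ref{lmF_Pbdd}, the supremum $M$ defined in (\ref{M}) lies in $(0,\infty)$. Pick any maximizing sequence $(V_n)\subset\mathcal{V}_P$. As noted after (\ref{M}), adding a suitable real constant to each $V_n$ does not change $\mathcal{F}_P(V_n)$ and preserves membership in $\mathcal{V}_P$, so we may normalize so that $Z(V_n)=1$, i.e.\ $\|e^{-V_n/2}\|_{2,e^{-W}d^N x}=1$. By reflexivity of $L^2(\Lambda^N;e^{-W}d^N x)$, a subsequence (still indexed by $n$) converges weakly to some $\Pi\in L^2(\Lambda^N;e^{-W}d^N x)$. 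Lemma \ref{Pi_nConL2} upgrades this to strong $L^2$ convergence, and in particular (since $d^N x$ and $e^{-W}d^N x$ share their null sets) a further subsequence converges a.e.\ to $\Pi$. This is exactly the setting in which $U$ is constructed by (\ref{defU}) and shown in Lemma \ref{U_inL^1_2} to belong to $\mathcal{V}_P$ and to satisfy $\mathcal{F}_P(U)=M$. Thus $U$ is the desired maximizer, and its representation as the symmetric $m$-particle sum $\sum u(x_{i_1},\ldots,x_{i_m})$ a.e.\ is produced along the way in the proof of Lemma \ref{U_inL^1_2} via Theorem 2.3 of \cite{Rabier15}.

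For uniqueness, suppose $U_0,U_1$ are maximizers of $\mathcal{F}_P$ on $\mathcal{V}_P$ (the argument for $\mathcal{V}_{\rho^{(m)}}$ is identical). By Lemma \ref{V_P_Convex} the midpoint $\tfrac{1}{2}(U_0+U_1)\in\mathcal{V}_P$ satisfies $\log\mathcal{F}_P(\tfrac{1}{2}(U_0+U_1))\geq \tfrac12\log\mathcal{F}_P(U_0)+\tfrac12\log\mathcal{F}_P(U_1)=\log M$, so equality must hold throughout. The equality clause of (\ref{fp_conc}) then forces $U_1-U_0$ to be a constant a.e.\ on $\Lambda^N$, which is the first uniqueness assertion.

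To pass from $U_1-U_0$ constant a.e.\ to $u_1-u_0$ constant a.e., write $U_1-U_0=\sum_{1\leq i_1<\cdots<i_m\leq N}(u_1-u_0)(x_{i_1},\ldots,x_{i_m})$ a.e.\ on $\Lambda^N$. The question is whether a symmetric $m$-particle sum that vanishes (modulo an additive constant) a.e.\ on $\Lambda^N$ forces its kernel to be constant a.e.\ on $\Lambda^m$. This is precisely the injectivity-up-to-constants of the $U$-statistic map, which is the content of Lemma 2.1 of \cite{Rabier15}; applying it to $u_1-u_0$ yields $u_1-u_0=\binom{N}{m}^{-1}c$ a.e.\ on $\Lambda^m$, where $c$ is the constant value of $U_1-U_0$. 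This is the step I expect to be the only real obstacle, and it is handled precisely by quoting \cite{Rabier15}; everything else in the theorem is essentially a bookkeeping exercise assembling Lemmas \ref{V_P_Convex}, \ref{lmF_Pbdd}, \ref{Pi_nConL2}, \ref{U_inL^1} and \ref{U_inL^1_2}.
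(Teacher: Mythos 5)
Your proof is correct and follows essentially the same route as the paper: the existence of a maximizer is obtained by pointing to Lemma \ref{U_inL^1_2} (which you also unpack back to the normalization, weak $L^2$ compactness, Lemma \ref{Pi_nConL2}, and the construction in (\ref{defU}), all of which the paper's proof implicitly relies on), and uniqueness on $\mathcal{V}_P$ (or $\mathcal{V}_{\rho^{(m)}}$) follows from the strict-concavity/equality clause of Lemma \ref{V_P_Convex}. The passage from ``$U_1-U_0$ constant a.e.'' to ``$u_1-u_0$ constant a.e.'' is indeed the only non-cosmetic step, and you correctly identify it as an injectivity-up-to-constants property of the symmetric $m$-particle sum.

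One small but real discrepancy: you attribute that injectivity statement to Lemma 2.1 of \cite{Rabier15}, but the paper cites Corollary 2.4 of \cite{Rabier15} for it. Lemma 2.1 is used elsewhere in the paper in the \emph{forward} direction --- to pass a.e.\ properties of the kernel $u$ (or $\xi$) to the corresponding sum $U$ (or $\Xi$), as in the proofs of Lemma \ref{U_inL^1_2} and Theorem \ref{th_max_sol} --- not to recover information about the kernel from the sum. The converse, namely that a symmetric $m$-particle sum that is a.e.\ constant on $\Lambda^N$ forces its kernel to be a.e.\ constant on $\Lambda^m$, is the content of Corollary 2.4. Since these are not interchangeable (the forward statement does not give injectivity), you should correct the citation; with that fix the proof matches the paper's.
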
  
\begin{proof}
By Lemma \ref{U_inL^1_2}, there is $U\in\mathcal{V}_P$ such that 
\begin{equation}
\mathcal{F}_P(V)\leq\mathcal{F}_P(U)~\text{for every $V\in\mathcal{V}_P$.} 
\label{FPV_L_FPU}
\end{equation}

Suppose, there are $U_0$, $U_1\in\mathcal{V}_P$ (or $\mathcal{V}_{\rho^{(m)}}$) satisfying (\ref{FPV_L_FPU}).  Then, by Lemma \ref{lm5}, there is $C\in\Bbb{R}$ such that 
$U_1-U_0=C$ a.e. on $\Lambda^{N}$.  Therefore, by Corollary 2.4 in \cite{Rabier15}, $u_1-u_0={\binom{N}{m}}^{-1}C$ a.e. on $\Lambda^m$.  
\end{proof}
\subsection{Maximizers of $\mathcal{V}_P$ and solutions of the inverse problem}
\label{sec_max_sol}
Now we are ready to prove the existence of solutions to the inverse problem.  
\begin{theor}
\label{th_max_sol}
Let $P$ be an a.e. positive and symmetric probability density, and $W$ be measurable and a.e. finite and symmetric potential on $\Lambda^N$.  Suppose that for some $U\in\mathcal{V}_P$, $\mathcal{F}_P(V)\leq\mathcal{F}_P(U)$ for every $V\in\mathcal{V}_P$.  Then, $\rho_U^{(m)}=\rho^{(m)}$ a.e.  
\end{theor}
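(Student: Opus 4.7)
The plan is a classical first-variation argument: perturb the maximizer $U$ within $\mathcal{V}_P$, differentiate $\log\mathcal{F}_P$ along the perturbation, and use vanishing of the derivative to obtain an orthogonality condition that forces $\rho_U^{(m)}=\rho^{(m)}$ a.e. Since $\mathcal{V}_P$ requires the sum structure (\ref{3}), admissible perturbations must themselves be sums of symmetric functions on $\Lambda^m$.

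Concretely, I would fix a bounded symmetric measurable $\psi:\Lambda^m\to\mathbb{R}$ and set
\begin{equation*}
\Psi(x_1,\dots,x_N):=\sum_{1\leq i_1<\cdots<i_m\leq N}\psi(x_{i_1},\dots,x_{i_m}).
\end{equation*}
First I would check that $U+t\Psi\in\mathcal{V}_P$ for every $t\in\mathbb{R}$: $\Psi$ is bounded with the correct sum structure, so $U+t\Psi\in L^1(Pd^Nx)$ follows from $U\in L^1(Pd^Nx)$, and $e^{-U-t\Psi}\leq e^{|t|\|\Psi\|_\infty}e^{-U}\in L^1(e^{-W}d^Nx)$. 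Next, since $U$ maximizes $\mathcal{F}_P$ on $\mathcal{V}_P$, the smooth real-valued function $t\mapsto\log\mathcal{F}_P(U+t\Psi)=-\int(U+t\Psi)Pd^Nx-\log Z(U+t\Psi)$ attains its maximum at $t=0$, so its derivative there vanishes. Differentiating $Z(U+t\Psi)$ under the integral is justified by dominated convergence (the bound $\|\Psi\|_\infty e^{\delta\|\Psi\|_\infty}e^{-U-W}$ works for $|t|<\delta$), giving
\begin{equation*}
0=\frac{d}{dt}\Big|_{t=0}\log\mathcal{F}_P(U+t\Psi)=-\int_{\Lambda^N}\Psi Pd^Nx+\int_{\Lambda^N}\Psi\,\frac{e^{-U-W}}{Z(U)}d^Nx.
\end{equation*}
The second integrand is $\Psi\cdot\rho_U^{(N)}$ in the notation of (\ref{1}). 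Applying Fubini--Tonelli exactly as in (\ref{eq15}) to both terms (using symmetry of $P$, $W$, and $\Psi$, and that partial integrations of $P$ and of $\rho_U^{(N)}$ yield $\rho^{(m)}$ and $\rho_U^{(m)}$ respectively) reduces the identity to
\begin{equation*}
\binom{N}{m}\int_{\Lambda^m}\psi(\rho_U^{(m)}-\rho^{(m)})d^mx=0.
\end{equation*}

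To finish, I would note that $\rho_U^{(m)}-\rho^{(m)}\in L^1(\Lambda^m;d^mx)$ and is symmetric, so $\psi_0:=\mathrm{sign}(\rho_U^{(m)}-\rho^{(m)})$ is a bounded symmetric measurable function on $\Lambda^m$, hence an admissible test function. Plugging it in yields $\int_{\Lambda^m}|\rho_U^{(m)}-\rho^{(m)}|d^mx=0$, i.e.\ $\rho_U^{(m)}=\rho^{(m)}$ a.e.

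I do not see a genuine obstacle here: the only delicate point is checking that the perturbation $U+t\Psi$ remains in $\mathcal{V}_P$ and that the interchange of $d/dt$ with $\int$ is legitimate, both of which are handled by restricting to bounded $\psi$ so that $\Psi\in L^\infty$ dominates everything uniformly in a neighborhood of $t=0$. The symmetry of $\rho_U^{(m)}-\rho^{(m)}$ (inherited from the symmetry of $P$, $W$, and $U$) is what makes the final density argument go through with a symmetric test function.
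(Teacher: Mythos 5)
Your proof is correct and follows essentially the same first-variation argument as the paper: perturb $U$ by $t\Xi$ with $\Xi$ built from a bounded symmetric test function, differentiate at $t=0$, use symmetry and Fubini--Tonelli to reduce the stationarity condition to $\int_{\Lambda^m}\psi(\rho_U^{(m)}-\rho^{(m)})\,d^mx=0$, and conclude with the sign-function test. The only small difference is that the paper multiplies the sign test function by $\chi_{S_m}$, where $S_m$ is a co-null symmetric set on which both $\rho_U^{(m)}$ and $\rho^{(m)}$ are finite and symmetric, so that the test function is a genuinely symmetric function on all of $\Lambda^m$; you should make this restriction explicit to justify the admissibility of your $\psi_0$.
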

\begin{rem}
\rm{According to Theorem \ref{th_max_sol}, every maximizer of $\mathcal{F}_P(V)$ on $\mathcal{V}_P$ is a solution.  Since such maximizers exist by Theorem \ref{th_ex_un}, the solutions of the inverse problem exist under the assumptions of Theorem \ref{th_ex_un}, with the additional requirement for $P$ and $W$ to be a.e. symmetric.}
\end{rem}
\begin{proof}[of Theorem \ref{th_max_sol}]
Let $\xi\in L^{\infty}(\Lambda^m; d^m x)$ be a symmetric real-valued function.  
For every $(x_1,...,x_N)\in\Lambda^N$, set 
\begin{equation}
\Xi (x_{1},...,x_{N}):=\sum_{1\leq i_{1}<\cdots <i_{m}\leq N}\xi
(x_{i_{1}},...,x_{i_{m}}).  \label{Xi}
\end{equation}
By definition of the norm of $L^{\infty }(\Lambda ^{m};d^{m}x)$, there is a
co-null set $T_{m}$ of $\Lambda ^{m}$ such that $|\xi (x_{1},...,x_{m})|\leq
||\xi ||_{\infty,d^{m}x}$ for every $(x_{1},...x_{m})\in T_{m}$.  By Lemma 2.1 in \cite{Rabier15}, 
$| \Xi (x_{1},...,x_{N})| \leq\binom{N}{m} ||\xi ||_{\infty,d^{m}x}$ for every $(x_1,...,x_N)$ in some co-null set $T_N\subset\Lambda^N$, and so $\Xi\in L^{\infty }(\Lambda ^{N};d^{N}x)$.  

For every $t\in\Bbb{R}$, $U+t\Xi\in\mathcal{V}_P$, and therefore, $\mathcal{F}_P(U+t\Xi)\leq\mathcal{F}_P(U)$.  
The function $t\mapsto\mathcal{F}_P(U+t\Xi)$ is smooth.  This is obvious for the numerator in (\ref{8}).  The same property for the denominator follows by a theorem on 
differentiation of parameter-dependent integrals \cite[Theorem 2.27]{Folland99}.   ($\Xi\in L^{\infty }(\Lambda ^{N};d^{N}x)$ is used here.)  
Thus, $\frac{d}{dt}\mathcal{F}_{P}(U+t\Xi )_{|t=0}=0.$  This gives 
\begin{equation}
\label{eq_3.2_1}
\mathcal{F}_P(U)\left[ \frac{\int_{\Lambda^N} e^{-U-W} \Xi d^{N}x}{Z(U)}-
\int_{\Lambda^N} \Xi P d^{N} x   \right ]=0.
\end{equation}    
By symmetry, (\ref{eq_3.2_1}) amounts to 
$\int_{\Lambda^{m}} \xi (\rho_U^{(m)}-\rho^{(m)} ) d^{m} x=0$.  Further, the set $S_m\subset\Lambda^m$ on which 
$\rho^{(m)}$ and $\rho_U^{(m)}$ are both finite and symmetric is co-null.  Choosing 
$\xi=\chi_{S_m}\times\func{sign}(\rho_U^{(m)}-\rho^{(m)} )$ 
(with $\func{sign}(0):=0 $), we obtain 
$\int_{\Lambda^{m}} |\rho_U^{(m)}-\rho^{(m)} | d^{m} x=0$, 
and so 
$\rho_U^{(m)}=\rho^{(m)}$ a.e. on $\Lambda^m$.  
\end{proof}
\begin{rem}
\rm{Theorem \ref{th_max_sol} is still true if $\mathcal{V}_P$ is replaced by $\mathcal{V}_{\rho^{(m)}}$, as can be easily verified by simply repeating the arguments in the proof.}
\end{rem}

\subsection{Uniqueness of the solutions to the inverse problem}
It remains to resolve the problem of the uniqueness of solutions.  So far, we have shown that solutions exist on $\mathcal{V}_P$.  However, as was mentioned in the introduction, they may not be unique.  It turns out though, 
that, under an additional assumption about $P$, the sets $\mathcal{V}_P$ and $\mathcal{V}_{\rho^{(m)}}$ coincide.  Thus, with this assumption, $\mathcal{V}_{\rho^{(m)}}$ contains solutions.  
  According to the next theorem, every solution on $\mathcal{V}_{\rho^{(m)}}$ is a maximizer of $\mathcal{F}_P$, and therefore is unique up to an additive constant by Theorem \ref{th_ex_un}.   
\begin{theor}
\label{th_uniq}
Suppose that $U\in\mathcal{V}_{\rho^{(m)}}$, where \newline $U(x_1,...,x_N)=
\sum_{1\leq i_1<...<i_m\leq N} u(x_{i_1},...,x_{i_m})$ a.e.  
Suppose also that $\rho_U^{(m)}=\rho^{(m)}$ a.e.  Then, $\mathcal{F}_P(V)\leq\mathcal{F}_P(U)$ for every $V\in\mathcal{V}_{\rho^{(m)}}$.  
Consequently, if $U_0, U_1\in \mathcal{V}_{\rho^{(m)}}$, 
and $\rho_{U_0}^{(m)}=\rho_{U_1}^{(m)}=\rho^{(m)}$ a.e., 
then $U_1-U_0$ is a.e. a constant, and the same is true for 
$u_1-u_0$.   
\end{theor}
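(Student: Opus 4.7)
The plan is to derive the maximality of $U$ from a one-step Gibbs/Jensen inequality, exploiting the fact that $V$ and $U$ both lie in $\mathcal{V}_{\rho^{(m)}}$ where the Fubini identity (\ref{eq15}) applies without obstruction. Writing $P_U:=e^{-U-W}/Z(U)$ for the canonical probability density whose $m$-marginal is $\rho_U^{(m)}$, I would first rewrite
\begin{equation*}
\frac{\mathcal{F}_P(V)}{\mathcal{F}_P(U)}=\frac{e^{-\int_{\Lambda^N}(V-U)\,P\,d^{N}x}}{\int_{\Lambda^N}e^{-(V-U)}\,P_U\,d^N x}.
\end{equation*}
Since $t\mapsto e^{-t}$ is convex and $P_U\,d^{N}x$ is a probability measure, Jensen's inequality bounds the denominator from below by $\exp\bigl(-\int_{\Lambda^N}(V-U)\,P_U\,d^N x\bigr)$, yielding
\begin{equation*}
\frac{\mathcal{F}_P(V)}{\mathcal{F}_P(U)}\leq\exp\Bigl(-\int_{\Lambda^N}(V-U)\,P\,d^N x+\int_{\Lambda^N}(V-U)\,P_U\,d^N x\Bigr).
\end{equation*}

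The proof then reduces to checking that the exponent vanishes. Because $V,U\in\mathcal{V}_{\rho^{(m)}}$, the difference $V-U$ has the sum structure (\ref{3}) with kernel $v-u\in L^1(\Lambda^m;\rho^{(m)}\,d^m x)$. Applying (\ref{eq15}) once with $(P,\rho^{(m)})$ and once with $(P_U,\rho_U^{(m)})$, I would collapse each integral in the exponent to
\begin{equation*}
\binom{N}{m}\int_{\Lambda^m}(v-u)\rho^{(m)}\,d^m x\quad\text{and}\quad\binom{N}{m}\int_{\Lambda^m}(v-u)\rho_U^{(m)}\,d^m x,
\end{equation*}
respectively. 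The hypothesis $\rho_U^{(m)}=\rho^{(m)}$ a.e.\ forces these to coincide, so the exponent is $0$ and $\mathcal{F}_P(V)\leq\mathcal{F}_P(U)$.

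Uniqueness is then immediate: if both $U_0,U_1\in\mathcal{V}_{\rho^{(m)}}$ solve the inverse problem, the first part shows each is a maximizer of $\mathcal{F}_P$ on $\mathcal{V}_{\rho^{(m)}}$, and Theorem \ref{th_ex_un} produces $C\in\Bbb{R}$ with $U_1-U_0=C$ a.e., whence $u_1-u_0={\binom{N}{m}}^{-1}C$ a.e.\ by Corollary 2.4 in \cite{Rabier15}. The only delicate point, and precisely the reason for restricting to $\mathcal{V}_{\rho^{(m)}}$ rather than $\mathcal{V}_P$, is that the Fubini reduction of $\int_{\Lambda^N}(V-U)\,P_U\,d^N x$ requires $v-u\in L^1(\Lambda^m;\rho_U^{(m)}\,d^m x)$; this is supplied by the built-in $L^1(\rho^{(m)}\,d^m x)$ membership combined with $\rho_U^{(m)}=\rho^{(m)}$ a.e. Beyond this single bookkeeping point I anticipate no further obstacle.
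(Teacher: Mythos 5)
Your proof is correct, and it takes a genuinely different route from the paper's. The paper first proves a lemma ($\mathcal{F}_P(U+\Xi)\leq\mathcal{F}_P(U)$ for admissible $\Xi$ with bounded kernel $\xi$) via the concavity of $t\mapsto\log\mathcal{F}_P(U+t\Xi)$ from Lemma \ref{V_P_Convex} together with the vanishing derivative at $t=0$ (which is where the hypothesis $\rho_U^{(m)}=\rho^{(m)}$ enters), and then reaches a general $V\in\mathcal{V}_{\rho^{(m)}}$ by a two-step truncation of $v-u$ (lower truncation $d_n=\max(v-u,-n)$ handled by monotone convergence, followed by upper truncation $\xi_n=\min(d_{n_0},n)$ handled by dominated convergence). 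You instead apply Jensen's inequality once, directly to the ratio $\mathcal{F}_P(V)/\mathcal{F}_P(U)$ with respect to the probability measure $P_U\,d^N x$ (the Gibbs variational principle in disguise), reducing the claim to $\int_{\Lambda^N}(V-U)(P-P_U)\,d^Nx=0$, which follows from the Fubini reduction (\ref{eq15}) applied to both densities and the marginal equality. The requisite integrability $V-U\in L^1(P_U\,d^Nx)$ follows as you note from $v-u\in L^1(\rho^{(m)}\,d^mx)=L^1(\rho_U^{(m)}\,d^mx)$ and the symmetry of $P_U$. Your argument is shorter and avoids both the differentiation-under-the-integral justification and the two approximation passes, and it makes the role of the hypothesis $\rho_U^{(m)}=\rho^{(m)}$ transparent (it kills the exponent); the paper's version is bulkier but fits the stationarity-and-concavity framework used throughout Section \ref{sec:ex_uniq}. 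The uniqueness corollary is handled identically in both.
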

\begin{proof}
The "Consequently" part of the conclusion follows directly from Theorem \ref{th_ex_un}.  To prove the rest, we need the following lemma:  
\begin{lem}
\label{uniq_lemma}
Let us call the functions in the form of (\ref{Xi}) "admissible," for brevity. 
Then, $\mathcal{F}_P(U+\Xi)\leq\mathcal{F}_P(U)$ for every admissible $\Xi$.  
\end{lem}
\begin{proof}
Since, $\rho_U^{(m)}=\rho^{(m)}$ a.e., 
$\int_{\Lambda^m} \xi (\rho_U^{(m)}-\rho^{(m)})d^m x=0$.  Thus, $\frac{d}{dt}\mathcal{F}_{P}(U+t\Xi )_{|t=0}=0$ (by reversing the steps in the proof of Theorem \ref{th_max_sol}).  Moreover, by Lemma \ref{V_P_Convex}, 
$t\mapsto \log \mathcal{F}_P(U+t\Xi)$ is a concave (smooth) function on $\Bbb{R}$.  Therefore, 
$\mathcal{F}_P(U+t\Xi)$ attains its maximum at $t=0$.  Taking $t=1$, gives 
$\mathcal{F}_P(U+\Xi)\leq\mathcal{F}_P(U)$.
\end{proof}

Next, we will show that for every $V\in V_{\rho^{(m)}}$ and every $\varepsilon>0$ there is an admissible $\Xi$ such that $\mathcal{F}_P(V)\leq\mathcal{F}_P(U+\Xi)+\varepsilon$, and so Lemma \ref{uniq_lemma} implies that $\mathcal{F}_P(V)\leq\mathcal{F}_P(U)$ for every $V\in V_{\rho^{(m)}}$.  

Let $\varepsilon>0$, and $V\in V_{\rho^{(m)}}$, where $V(x_1,...,x_N)=\sum_{1\leq i_1<\cdots<i_m\leq N} v(x_{i_1},...,x_{i_m})$ a.e.  Define $d_n=\max(v-u,-n)$, so $d_n$ is symmetric, bounded below, and $d_n+u\rightarrow v$ pointwise  and in 
$L^{1}(\Lambda^{m};\rho^{(m)} d^m x)$ (by dominated convergence, because $|d_n+u|\leq |v-u|+|u|$).  Let  \newline 
$\Delta_n(x_1,...,x_N)=\sum_{1\leq i_1<\cdots<i_m\leq N} d_n(x_{i_1},...,x_{i_m})$.  Then, 
\begin{multline}
\int_{\Lambda^N} (U+\Delta_n) P d^N x= \binom{N}{m} \int_{\Lambda^m} (u+d_n) \rho^{(m)} d^{m} x\\
\rightarrow \binom{N}{m} \int_{\Lambda^m} v \rho^{(m)} d^{m} x=\int_{\Lambda^N} V P d^{N} x .
\label{uniq_one}
\end{multline}    
Since $e^{-U-\Delta_n}$ is increasing to $e^{-V}$ a.e., the monotone convergence gives
\begin{equation}
\int_{\Lambda^N}e^{-U-\Delta_n-W}d^{N}x\rightarrow 
\int_{\Lambda^N}e^{-V-W}d^{N}x.
\label{uniq_two}
\end{equation}
By (\ref{uniq_one}), (\ref{uniq_two}) and (\ref{8}), there is $n_0\in\Bbb{N}$ such that 
$|\mathcal{F}_P(V)-\mathcal{F}_P(U+\Delta_{n_0})|<\varepsilon/2$.

Define $\xi_n=\min(d_{n_0},n)$.  Then, $\xi_n$ is symmetric, bounded, and 
$u+\xi_n\rightarrow u+d_{n_o}$ pointwise and in $L^{1}(\Lambda^{m};\rho^{(m)} d^m x)$ (by dominated convergence, because $|u+\xi_{n} |\leq |u|+|v-u|$).  Define $\Xi_n(x_1,...,x_N)=\sum_{1\leq i_1<\cdots<i_m\leq N} \xi_n(x_{i_1},...,x_{i_m})$.  Therefore, similarly to (\ref{uniq_one}),
\begin{multline}
\int_{\Lambda^N} (U+\Xi_n) P d^N x= \binom{N}{m} \int_{\Lambda^m} (u+\xi_n) \rho^{(m)} d^{m} x\\
\rightarrow \binom{N}{m} \int_{\Lambda^m}( u+d_{n_0} )\rho^{(m)} d^{m} x=\int_{\Lambda^N} (U+\Delta_{n_0} )P d^{N} x .  
\label{uniq_three}
\end{multline}  
Because $e^{-U-\Xi_n}\rightarrow e^{-U-\Delta_{n_0}}$ a.e., 
and $e^{-U-\Xi_n}\leq e^{-U+\binom{N}{m}n_0}\in L^{1}(\Lambda^N;e^{-W}d^{N} x)$, the dominated convergence yields
\begin{equation}
\int_{\Lambda^N} e^{-U-\Xi_n} d^{N}x\rightarrow \int_{\Lambda^N} e^{-U-\Delta_{n_0}} d^{N}x.  
\label{uniq_four}
\end{equation}
By (\ref{uniq_three}), (\ref{uniq_four}), and (\ref{8}), there is $n_1\in\Bbb{N}$ such that $|\mathcal{F}_P(U+\Delta_{n_0})-\mathcal{F}_P(U+\Xi_{n_1})|<\varepsilon/2$.
Taking $\Xi:=\Xi_{n_1}$, the triangle inequality gives $|\mathcal{F}_P(V)-\mathcal{F}_P(U+\Xi )|<\varepsilon$. 
\end{proof}
\begin{rem}
\rm{The above argument will not work if it is only known that $V\in \mathcal{V}_P$ because $(\Delta_n)_{+}\geq (V - U)_{+}$, and so $U+\Delta_n$ may fail to be in $L^{1}(\Lambda^N;P d^{N}x)$.  If we were to define 
$\Delta_n=\max(V-U,-n)$, then $(\Delta_n)_{+}=(V - U)_{+}$, but $\Delta_n$ would not have the required sum structure to construct $\Xi$.}
\end{rem}
\begin{rem}
\rm{The uniqueness of solutions on $\mathcal{V}_{\rho^{(m)}}$ can also be proved by the argument used by Chayes et al. \cite[Theorem 2.4]{ChChLi84}.  (The same argument 
has also been used by Henderson \cite{He74}.)    
However, Theorem \ref{th_uniq} gives us more information because 
$\mathcal{V}_{\rho^{(m)}}$ may not contain maximizers, and therefore uniqueness of solutions on $\mathcal{V}_{\rho^{(m)}}$ does 
not imply      
that every solution is a maximizer, as Theorem \ref{th_uniq} asserts.}
\end{rem}

\subsection{Existence and uniqueness of the solutions to the inverse problem}
As was mentioned earlier in this section, the inclusion 
$\mathcal{V}_{\rho^{(m)}}\subset\mathcal{V}_{P}$ is proper in general.  Provided that $u\in \mathcal{V}_{P}$, $u\in 
\mathcal{V}_{\rho^{(m)}}$ if and only if $u\in L^{1}(\Lambda^m; \rho^{(m)} d^m x)$.  Theorem 4.5 in \cite{Rabier15} gives us a sufficient condition on $P$ under which $U$ in the form of (\ref{3}) and $U\in L^{1}(\Lambda^N; P d^N x)$ imply that $u\in L^{1}(\Lambda^m; \rho^{(m)} d^m x)$.  Adding this condition to the assumptions made previously, we are able to formulate the following theorem on the existence and uniqueness of solutions to the inverse problem.  
Since the solution for $m=N$ is trivial, we will assume 
that $m<N$.   
\begin{theor}
\label{Th_ex_un_sol}
Let $N\geq 2$, $1\leq m\leq N-1$ be integers, and $\rho^{(m)}$ be an $m$-variable reduction of some a.e. positive and symmetric probability density $P$.  That is, $\rho^{(m)}$ is defined by (\ref{4}).  Suppose that for a.e. $x_N$ in some 
subset $B\subset\Lambda$ of positive $dx$ measure, there 
is a constant $\gamma(x_N)>0$ such that 
\begin{equation}
\label{cond_on_P_sec3}
P(\cdot, x_{N} )\geq \gamma(x_N) \rho^{(N-1)}~ \text{a.e. on $\Lambda^{N-1}$}.  
\end{equation}
Let $W$ be an a.e. finite and symmetric measurable potential on $\Lambda^N$.  Then, $\mathcal{V}_P=\mathcal{V}_{\rho^{(m)}}$.  If, in addition, $\mathcal{V}_P\neq\emptyset$ and $(W+\log{P})_{+}\in L^{1}(\Lambda^N; Pd^N x)$, there is $U\in\mathcal{V}_P$ (with 
$U=\sum_{1\leq i_1<\cdots<i_m\leq N} u(x_{i_1},...,x_{i_m})$ a.e.) 
such that $\rho^{(m)}_{U}=\rho^{(m)}$ a.e., where 
$\rho^{(m)}_{U}$ is defined by (\ref{1}) and (\ref{2}).  Moreover, if $U_0$, $U_1\in\mathcal{V}_P$ and $\rho^{(m)}_{U_0}=\rho^{(m)}_{U_1}=\rho^{(m)}$ a.e., then $U_1-U_0$ is a constant a.e., and the same is true for $u_1-u_0$.
\end{theor}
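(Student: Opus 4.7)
The plan is to reduce this theorem entirely to the preceding results in the paper, with the only new ingredient being the set equality $\mathcal{V}_P = \mathcal{V}_{\rho^{(m)}}$, which is what the hypothesis (\ref{cond_on_P_sec3}) on $P$ is designed to give.

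First I would establish $\mathcal{V}_P = \mathcal{V}_{\rho^{(m)}}$. The inclusion $\mathcal{V}_{\rho^{(m)}} \subset \mathcal{V}_P$ was already obtained from the Fubini--Tonelli computation in (\ref{eq15}), so it suffices to prove the reverse. Take $V \in \mathcal{V}_P$ with representation
\begin{equation*}
V(x_1,\ldots,x_N) = \sum_{1 \leq i_1 < \cdots < i_m \leq N} v(x_{i_1},\ldots,x_{i_m}) \quad\text{a.e.,}
\end{equation*}
where $v$ is symmetric and measurable. Then $V \in L^1(\Lambda^N; P\, d^N x)$ by definition of $\mathcal{V}_P$. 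Invoking Theorem 4.5 of \cite{Rabier15}, whose hypothesis is precisely the positivity condition (\ref{cond_on_P_sec3}), we obtain $v \in L^1(\Lambda^m; \rho^{(m)} d^m x)$. Together with the $e^{-V}$-integrability already built into $\mathcal{V}_P$, this places $V$ in $\mathcal{V}_{\rho^{(m)}}$.

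Next, for existence, I would apply Theorem \ref{th_ex_un} directly: the standing hypotheses $\mathcal{V}_P \neq \emptyset$ and $(W+\log P)_+ \in L^1(\Lambda^N; P\, d^N x)$, together with the a.e.\ finiteness of $W$ and positivity of $P$, deliver a maximizer $U \in \mathcal{V}_P$ of $\mathcal{F}_P$ with the required sum structure. Since in the current theorem $P$ and $W$ are also assumed symmetric a.e., Theorem \ref{th_max_sol} applies and gives $\rho^{(m)}_U = \rho^{(m)}$ a.e. This settles the existence claim.

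For uniqueness, suppose $U_0, U_1 \in \mathcal{V}_P$ both satisfy $\rho^{(m)}_{U_i} = \rho^{(m)}$ a.e. By the set equality just proved, $U_0, U_1 \in \mathcal{V}_{\rho^{(m)}}$. Theorem \ref{th_uniq} then applies and yields that $U_1 - U_0$ is constant a.e., with the corresponding statement $u_1 - u_0 = \binom{N}{m}^{-1} C$ a.e.\ for the $m$-particle interactions (using Corollary 2.4 of \cite{Rabier15}, as in the proof of Theorem \ref{th_ex_un}).

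The only nontrivial step is the first one, and even that is essentially delegated to the companion paper \cite{Rabier15}: the genuinely hard obstacle — extracting integrability of the symmetric kernel $v$ against the marginal $\rho^{(m)}$ from integrability of the full $U$-statistic $V$ against $P$ — is precisely the measure-theoretic content of Theorem 4.5 there, and the positivity condition (\ref{cond_on_P_sec3}) is the assumption tailored to make that extraction possible. Once that is granted, the present theorem is a bookkeeping assembly of Theorems \ref{th_ex_un}, \ref{th_max_sol}, and \ref{th_uniq}.
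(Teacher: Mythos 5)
Your proposal is correct and follows essentially the same route as the paper: establish $\mathcal{V}_P=\mathcal{V}_{\rho^{(m)}}$ from Theorem 4.5 of \cite{Rabier15} under hypothesis (\ref{cond_on_P_sec3}), then assemble Theorems \ref{th_ex_un}, \ref{th_max_sol}, and \ref{th_uniq} for existence, the solution property, and uniqueness respectively. The only difference is cosmetic --- you spell out the invocation of Corollary 2.4 of \cite{Rabier15} for the statement about $u_1-u_0$, which the paper leaves to the reader.
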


\begin{rem}
\rm{Condition (\ref{cond_on_P_sec3}) holds for arbitrarily close perturbations in $L^{1}(\Lambda^{N} ; d^{N}x)$ of any 
symmetric probability density $P$.  Moreover, if the 
measure $dx$ is finite, then any bounded symmetric 
probability density can be approximated in 
$L^{\infty}(\Lambda^{N} ; d^{N}x)$ by bounded symmetric 
probability densities satisfying (\ref{cond_on_P_sec3}).  
See Theorems 4.7 and 4.8 in \cite{Rabier15}.}
\end{rem}

\begin{proof}[Proof of Theorem \ref{Th_ex_un_sol}]
According to Theorems \ref{th_ex_un} and \ref{th_max_sol}, there is 
a unique maximizer of $\mathcal{F}_P$ on 
$\mathcal{V}_P$, and this maximizer is a solution of the inverse problem.  Theorem 4.5 in \cite{Rabier15} implies 
that if $P$ satisfies 
(\ref{cond_on_P_sec3}) then 
$\mathcal{V}_P=\mathcal{V}_{\rho^{(m)}}$.  In its turn, 
Theorem \ref{th_uniq} asserts that every solution of the inverse 
problem on $\mathcal{V}_{\rho^{(m)}}$ is a maximizer.  Thus, 
uniqueness of solutions follows.    
\end{proof} 
 
\section*{Acknowledgements}

The author is deeply indebted to Patrick J. Rabier for his generous contributions to this work.  The author is glad to express her gratitude to William Noid for first introducing her to the inverse problem.  


%
%
%

\end{document}